\documentclass[11pt]{article}
\usepackage{graphicx}
\usepackage{latexsym}
\usepackage{amssymb}
\usepackage{amsmath}
\usepackage{amsthm}
\usepackage{thmtools}
\makeatletter
\@ifundefined{newcounteralias}{}{%
  \renewcommand\thmt@autorefsetup{\@xa\def\csname\thmt@envname autorefname\@xa\endcsname\@xa{\thmt@thmname}}%
}
\makeatother
\usepackage{forloop}
\usepackage[paper=letterpaper,margin=1in]{geometry}
\usepackage[ruled,vlined,linesnumbered]{algorithm2e}
\usepackage{paralist}
\usepackage{xcolor}
\usepackage{hyperref}

\definecolor{darkgreen}{rgb}{0,0.5,0}
\hypersetup{
    unicode=false,            
    colorlinks=true,          
    linkcolor=blue!70!black,  
    citecolor=darkgreen,      
    filecolor=magenta,        
    urlcolor=blue!70!black    
}

\newtheorem{theorem}{Theorem}[section]
\newtheorem{lemma}[theorem]{Lemma}\RenewCommandCopy{\theHlemma}{\theHtheorem}
\RenewCommandCopy{\theHcorollary}{\theHtheorem}
\newtheorem{observation}[theorem]{Observation}\RenewCommandCopy{\theHobservation}{\theHobservation}

\newcommand{\defcal}[1]{\expandafter\newcommand\csname c#1\endcsname{{\mathcal{#1}}}}
\newcommand{\defbb}[1]{\expandafter\newcommand\csname b#1\endcsname{{\mathbb{#1}}}}
\newcommand{\defvec}[1]{\expandafter\newcommand\csname v#1\endcsname{{\mathbf{#1}}}}
\newcounter{calBbCounter}
\forLoop{1}{26}{calBbCounter}{
    \edef\letter{\alph{calBbCounter}}
		\edef\Letter{\Alph{calBbCounter}}
    \expandafter\defcal\Letter
		\expandafter\defbb\Letter
		\expandafter\defvec\letter
}

\newcommand{\nnR}{{\bR_{\geq 0}}}

\newcommand{\characteristic}{{\mathbf{1}}}
\newcommand{\email}[1]{{\href{mailto:#1}{#1}}}
\newcommand{\groundset}{N}

\usepackage[color=lightgray]{todonotes}

\begin{document}

\title{On Maximizing a Weakly Submodular Function over\\a Matroid Constraint via the Greedy Algorithm}
\author{Moran Feldman\footnote{Department of Computer Science, University of Haifa. This work was done while the author was visiting Queen Mary University of London. E-mail: \email{moranfe@cs.haifa.ac.il}} \and Justin Ward\footnote{School of Mathematics, Queen Mary University of London. E-mail: \email{justin.ward@qmul.ac.uk}}}

\maketitle
\thispagestyle{empty}
\pagenumbering{Alph}
\begin{abstract}

We consider the problem of approximately maximizing a weakly submodular function using the standard greedy algorithm, which is known to give tight approximation results for such functions under a cardinality constraint. We show that this is not the case for general matroid constraints. For any $\gamma < 1$, we give a family of $\gamma$-weakly submodular functions and a simple partition matroid constraint and show that the standard greedy algorithm provides no constant approximation for the resulting constrained maximization problem.

\medskip

\noindent \textbf{keywords:} submodularity ratio, weakly submodular function, matroid constraint, greedy algorithm

\end{abstract}

\newpage
\pagenumbering{arabic}

\section{Introduction}

Submodular functions are an important family of set functions\footnote{A set function is a function $f\colon 2^\groundset \to \bR$ assigning a numerical value to every subset of some ground set $\groundset$.} formalizing the notion of diminishing returns. Naturally, such functions are prevalent in economy and game theory~\cite{nisan2007algorithmic}, but also encompass prominent functions from combinatorial optimization such as cut functions, coverage functions and matroid rank functions~\cite{buchbinder2018submodular}. Furthermore, because the amount of information provided by a set of objects often exhibits diminishing returns, optimization of submodular functions has found applications also in other fields such as machine learning, image processing and decision and control~\cite{bach2013learning,bilmes2022submodularity,dutta2026submodular,krause2011submodularity}.

For many submodular maximization problems the canonical algorithm is the natural greedy algorithm. The greedy algorithm iteratively adds elements to its solution, picking in each iteration the element that increases the value of the solution by the most (to ensure feasibility, the greedy algorithm considers for selection in each iteration only elements that can be added without violating the constraint). It is well-known that the greedy algorithm obtains an approximation ratio of $1 - 1/e$ for maximization of a monotone submodular function over a cardinality constraint, which is the best ratio obtainable by a polynomial time algorithm~\cite{nemhauser1978best,nemhauser1978analysis}. The greedy algorithm is useful also for more involved kinds of constraints, although its guarantee often degrades. For example, in $1978$, it was shown that the greedy algorithm guarantees $1/2$-approximation for maximizing a monotone submodular function over a general matroid constraint~\cite{fisher1978analysis}. This guarantee was the state-of-the-art until the introduction of the randomized Continuous Greedy algorithm roughly $30$ years later, which attains the optimal $(1 - 1/e)$-approximation~\cite{calinescu2011maximizing}. Later works also managed to improve over the greedy algorithm using deterministic algorithms~\cite{buchbinder2024deterministic,buchbinder2023deterministic}.

In practice, the greedy algorithm is often very successful even when the objective function is not submodular. Das and Kempe~\cite{das2018approximate} introduced an important theoretical tool for understanding this phenomenon. Specifically, they defined the \emph{submodularity ratio}, which is a number between $0$ and $1$ measuring the degree of submodularity of a monotone set function $f$. This ratio compares the increase in $f$ that can be realized by adding a set of elements to some base set to the sum of the individual increases realized when each element is added to the same set individually (see \eqref{eq:gamma-ws} for a formal definition). The resulting ratio is 1 for submodular functions, and it becomes smaller as the function exhibits more and more non-submodular behavior (i.e., as the gain realized by adding some set of elements further and further exceeds the sum of these individual elements' marginal values). Das and Kempe~\cite{das2018approximate} proved that the greedy algorithm guarantees $(1 - e^{-\gamma})$-approximation for maximizing monotone set functions with a submodularity ratio $\gamma$ over a cardinality constraint (such functions are also called \emph{$\gamma$-weakly submodular functions}).
Later, Harshaw et al.~\cite{harshaw2019submodular} proved that this guarantee of the greedy algorithm is the best that can be achieved by any polynomial time algorithm.

The original motivation of Das and Kempe for introducing the submodularity ratio was sparse approximation over dictionary vectors, but later works showed that this ratio is useful also in many other applications (a few examples of these include~\cite{agrawal2026minibatch,bian2017guarantees,elenberg2017streaming,harshaw2019submodular,khanna2017scalable,santiago2020weakly,tofigh2026acli,zhu2024regularized}). Despite its richness, almost all the work on the submodularity ratio considers only cardinality constraints, and it is unknown whether the greedy algorithm has any non-trivial guarantees for maximizing $\gamma$-weakly submodular functions over other types of constraints. Chen et al.~\cite{chen2108weakly} showed that a randomized version of the greedy algorithm guarantees $(1 + 1/\gamma)^{-2}$-approximation for maximization of such functions over a matroid constraint. Notice that this approximation ratio is constant in the sense that is independent of the size of the instance. Chen et al.\ explicitly leave open the question of whether the standard greedy algorithm may already attain some, potentially weaker but still constant factor, approximation in the matroid case.

More restrictive notions of approximate submodularity have also been proposed. The \emph{generic submodular ratio}~\cite{gong2021maximize,gong2019parametric} and \emph{inverse curvature}~\cite{bogunovic2018robust} both bound the maximum factor by which \emph{any individual element's} marginal value may increase as other elements are added to the solution. Gong et al.~\cite{gong2021maximize} show that the greedy algorithm gives a $\frac{\bar{\gamma}}{1 + \bar{\gamma}}$-approximation for maximizing a function with generic submodular ratio $\bar{\gamma}$ under a single matroid constraint, as well as a constant-factor approximation (depending on $\bar{\gamma}$) under knapsack and multiple matroid constraints. Kuhnle et al.\ proposed \emph{weak DR-submodularity}, which imposes a similar element-wise constraint on continuous functions and functions on the integer lattice~\cite{kuhnle2018fast}. It is easy to show that all of these element-wise conditions imply bounds on the submodularity ratio $\gamma$. However, the converse is not true, even for several regression problems originally motivating the submodularity ratio $\gamma$~\cite{thiery2022two}. Ward and Thiery~\cite{thiery2022two} introduced a complementary parameter $\beta$ to the submodularity ratio that considers the aggregate effect of element \emph{removals}. They showed that this quantity $\beta$ can be bounded analogously to the submodularity ratio in many of the problems considered in~\cite{das2018approximate} and gave algorithms for matroid-constrained problems that improve on the approximation performance of Chen et al.~\cite{chen2108weakly} when this is the case. Finally, we note that more \emph{general} notions of approximate submodularity have also been introduced, which can be used to obtain guarantee for \emph{non-monotone} functions in the cardinality constrained case~\cite{santiago2020weakly}.

Despite the above work, it is still unknown whether the standard greedy algorithm may in fact attain a constant-factor approximation for the original, general class of $\gamma$-weakly submodular functions proposed by Das and Kempe~\cite{das2018approximate} without further restrictions. In this paper, we provide a negative answer for this question, showing that for any $\gamma \in (0, 1)$, the approximation ratio of the greedy algorithm for matroid constraints must deteriorate with the size of the ground set. This implies that, unlike the case of cardinality constraints, either additional algorithmic insights or further restrictions are indeed necessary to obtain \emph{any constant factor approximation} for $\gamma$-weakly submodular functions under matroid constraints. More formally, we prove the following theorem.

\begin{theorem} \label{thm:greedy_bad_example}
For evry value $\gamma \in (0, 1)$ and positive integer $n$, there exists a non-negative monotone $\gamma$-weakly submodular function $f \colon 2^\groundset \to \nnR$ and a simplified partition matroid $\cM = (\groundset, \cI)$ of rank $n-1$ over a ground set $\groundset$ of size $n$ such that, when using the greedy algorithm to maximize $f$ subject to $\cM$, the approximation ratio of the solution obtained is $O(\frac{\gamma}{(1 - \gamma)^2 \log n})$.
\end{theorem}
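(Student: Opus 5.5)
The plan is to build an explicit instance in which the whole loss comes from a single bad first choice of the greedy algorithm. Take $\groundset = \{a, b\} \cup X$ with $X = \{x_1, \dots, x_{n-2}\}$. Let $\cM$ be the simplified partition matroid whose parts are $\{a,b\}$ with capacity $1$ and each $\{x_i\}$ with capacity $1$; its rank is $n-1$. Then every base is either $\{a\} \cup X$ or $\{b\} \cup X$.

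The objective will have the form
\[
  f(S) = \alpha\,\characteristic[a \in S] + (\alpha-\epsilon)\,\characteristic[b \in S] + u(|S \cap X|) + \characteristic[b \in S]\, v(|S \cap X|).
\]
Here $u$ is concave with small increments, say $u(k) = \delta k$, and $v$ is a non-decreasing convex function with $v(0)=0$. The idea is that $b$ unlocks a large complementarity with $X$, but this payoff is invisible to marginal-greedy reasoning at the start. I will choose $\alpha$ so that $\alpha > \delta$, and $\epsilon>0$ tiny, so the greedy algorithm first picks $a$, which has the largest singleton value. That choice blocks $b$ forever. Afterwards greedy can only add elements of $X$ and ends with $\{a\}\cup X$, of value $\alpha + \delta(n-2)$. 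The optimum $\{b\}\cup X$ is worth about $\alpha + \delta(n-2) + v(n-2)$. The ratio is therefore governed by $(\alpha + \delta n)/v(n-2)$. Since $f$ is a sum of monotone non-negative terms, it is monotone and non-negative.

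The main work is the $\gamma$-weak submodularity check. For $S$ and $T$ I must show
\[
  f(S \cup T) - f(S) \ge \gamma \sum_{e \in T} \bigl(f(S+e)-f(S)\bigr).
\]
The linear parts are modular and cause no trouble. The dangerous terms involve $v$, in two cases. In the first case $b \notin S$ and $T$ contains $b$ together with some elements of $X$: the individual marginals are small, since adding $b$ alone gains $\alpha - \epsilon + v(k)$, while the joint gain is larger, which is fine. In the second case $b \in S$ with $|S\cap X| = k$ and $T \subseteq X$ with $|T| = m$. Here I need
\[
  v(k+m) - v(k) \ge \gamma\, m\,\bigl(v(k+1)-v(k)\bigr),
\]
which holds automatically by convexity of $v$. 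The reverse danger is that convexity is exactly what creates large joint gains relative to singletons. The submodularity ratio only lower-bounds the joint gain, however, so convexity of $v$ is harmless in this case.

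The genuine constraint appears when $b\notin S$ and $T = \{b\}\cup T'$ with $|T'| = m$. The joint gain is $\alpha-\epsilon + \delta m + v(k+m)$, against the sum of singleton marginals $\alpha - \epsilon + v(k) + \delta m$, and this direction is also fine. So with the form above, weak submodularity is too easy to satisfy and the construction looks too good. This signals where the real obstacle lies: greedy must be prevented from seeing $v$ at all, yet $f$ must satisfy the ratio when $S=\emptyset$ and $T=\{b\}\cup X$. In that case the joint gain involves $v(n-2)$, which is large, and this forces nothing.

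The issue I expect to bite is instead sets $S$ that contain $a$. I will therefore couple $a$ with $X$ as well: give $a$ a symmetric but much weaker complementarity, so that greedy's elements of $X$ look attractive after $a$ is taken. I will then tune $v$ so that $v(k+1)-v(k)$ grows like $k^{(1-\gamma)/\gamma}$-type rates, or harmonically, so that every inequality holds with ratio exactly $\gamma$. Balancing these growth rates so that the final quotient comes out as $O(\gamma/((1-\gamma)^2\log n))$ is the step I expect to be hardest. I would first try increments of $v$ of the form $1/(1-\gamma)$ times harmonic weights and verify the worst case $S=\{a\}\cup X_{\le k}$ by a direct summation.
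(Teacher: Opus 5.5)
\textbf{There is a genuine gap: the inequality you use for $\gamma$-weak submodularity is reversed.} By \eqref{eq:gamma-ws}, the condition is $\gamma\cdot f(T\mid S)\le\sum_{u\in T\setminus S}f(u\mid S)$. It is an \emph{upper} bound on joint gains: they may exceed the sum of singleton marginals by at most a factor $1/\gamma$. So the convexity of $v$ is exactly what the condition forbids, not something harmless. Your remark that the construction ``looks too good'' is a symptom of this reversal.

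With the correct inequality, your family fails already at $S=\varnothing$, $T=\{b\}\cup X$. The condition forces
\[
\gamma\bigl(\alpha-\epsilon+\delta(n-2)+v(n-2)\bigr)\le \alpha-\epsilon+\delta(n-2),
\]
hence $f(\{b\}\cup X)<\frac{1}{\gamma}\bigl(\alpha+\delta(n-2)\bigr)=\frac{1}{\gamma}f(\{a\}\cup X)$. So greedy is a $\gamma$-approximation on your instance, a constant independent of $n$. The fix in your last paragraph, adding a complementarity between $a$ and $X$, only increases $f(\{a\}\cup X)$ and leaves the singleton values unchanged, so it cannot help. You also never fix a specific $v$ or derive a bound, so the rate $O(\gamma/((1-\gamma)^2\log n))$ is not established.

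\textbf{The missing idea.} For any instance, $f(O)\le f(\varnothing)+\frac{1}{\gamma}\sum_{e\in O}f(e\mid\varnothing)$. A gap growing with $n$ is therefore possible only if the elements greedy collects have sharply diminishing returns along greedy's path, so that $f(G)$ falls far below the sum of singleton values. Your linear $u$ hands greedy the full singleton value of $X$, which rules this out.

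The paper adds no convex bonus at all; both branches are concave in $|S\cap Z|$.
\begin{itemize}
\item Without $o$, the increments are $a_i=\gamma\min\{\ln(1+1/i),2i^{-\beta}\}$ with $\beta=\gamma+\gamma^{-1}-1>1$. These are summable, so $f(Z\cup\{v\})=O(\gamma/(1-\gamma)^2)$.
\item With $o$, the increments are $\ln(1+1/i)$, which sum to $\ln(n-1)$.
\item The singleton sum $\sum_{e\in Z\cup\{o\}}f(e\mid\varnothing)=(n-1)a_1$ is $\Theta(\gamma n)$, which leaves ample room.
\end{itemize}
All the non-submodularity sits in the marginal of $o$, namely $a_1+g(k)-\sum_{i\le k}a_i$, which grows with $k$. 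The technical heart is Case~3 of Lemma~\ref{lem:f-weakly-submodular}: showing $\frac{\gamma\, g(x+y)+h(x)}{y}\le a_{x+1}$ for all $x\ge 0$, $y\ge 1$. There $\beta$ is tuned exactly so that $(1-\gamma)\gamma-1=-\gamma\beta$. This gives $e^{-h(x)/\gamma}\ge(1+x)^{\beta}=2\gamma/a_{x+1}$ whenever the $2i^{-\beta}$ term is active.
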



\section{Preliminaries}

\paragraph{Set Functions.}

Given a set function $f\colon 2^\groundset \to \bR$, a set $S \subseteq \groundset$ and an element $u \in \groundset$, we denote $f(u \mid S) \triangleq f(S \cup \{u\}) - f(S)$, and we refer to this value as the \emph{marginal contribution} of $u$ to the set $S$ with respect to $f$. Similarly, given an additional set $T \subseteq \groundset$, we denote by $f(T \mid S) \triangleq f(S \cup T) - f(S)$ the marginal contribution of $T$ to $S$ with respect to $f$.

The set function $f$ is monotone if $f(u \mid S) \geq 0$ for every $S \subseteq \groundset$ and $u \in \groundset \setminus S$. The submodularity ratio of a monotone $f$ is the largest value $\gamma \in [0, 1]$ for which
\begin{equation}
	\gamma \cdot f(T \mid S)
	\leq
	\sum_{u \in T \setminus S} f(u \mid S)
	\qquad
	\forall\; S \subseteq T \subseteq \groundset
	\enspace.\label{eq:gamma-ws}
      \end{equation}
As mentioned above, we say that $f$ is $\gamma$-weakly submodular if its submodularity ratio is at least $\gamma$. The function $f$ is submodular if $f(u \mid S) \geq f(u \mid T)$ for every two sets $S \subseteq T \subseteq \groundset$ and element $u \in \groundset \setminus T$. It is known that a monotone function is $1$-weakly submodular if and only if it is submodular~\cite{das2018approximate}.

\paragraph{Matroids.}

A matroid is defined by a pair $(\groundset, \cI)$, where $\groundset$ is the ground set of the matroid and $\cI \subseteq 2^\groundset$ is a collection of subsets of $\groundset$ obeying the following three properties:
{
\setlength{\pltopsep}{5pt}
\setlength{\plitemsep}{0pt}
\begin{compactitem}
	\item The set $\varnothing$ belongs to $\cI$.
	\item For every two sets $S \subseteq T \subseteq \groundset$, if $T \in \cI$, then $S \in \cI$.
	\item For every two sets $S, T \subseteq \groundset$, if $|S| < |T|$, then there exists an element $u \in T \setminus S$ such that $S \cup \{u\} \in \cI$.
\end{compactitem}
}
The sets in the collection $\cI$ are the sets that are feasible under the constraint defined by this matroid. Following linear algebra motivations for the study of matroids, it is customary to refer to these sets as the \emph{independent sets} of the matroid.

In this paper, we only consider a particular family of matroids known as \emph{simplified partition matroids}. A simplified partition matroid of rank $k$ is defined using a partition of the ground set $\groundset$ into $k$ disjoint parts $\groundset_1, \groundset_2, \dotsc, \groundset_k$. A set $S \subseteq \groundset$ is then independent in the simplified partition matroid if $|S \cap \groundset_i| = 1$ for every $i \in [k]$.


\paragraph{The Greedy Algorithm.}

A formal presentation of the greedy algorithm for maximizing a monotone submodular function over a matroid constraint can be found as Algorithm~\ref{alg:greedy}.

\begin{algorithm}
\caption{\textsc{Greedy for Matroid Constraints} $(\groundset, \cI, f)$} \label{alg:greedy}
Let $i \gets 0$ and $S_0 \gets \varnothing$.\\
\While{there exists an element $u \in \groundset \setminus S_i$ such that $S_i \cup \{u\} \in \cI$}
{
	Update $i \gets i + 1$.\\
	Let $u_i \in \arg\max_{u \in \groundset \mid S_{i - 1} \cup \{u\} \in \cI} f(u \mid S_{i - 1})$.\\
	Let $S_i \gets S_{i - 1} \cup u_i$.
}
\Return $S_i$.
\end{algorithm}


\section{Bad Instance}
\label{sec:hard-instance}

In this section, we prove Theorem~\ref{thm:greedy_bad_example} by constructing a instance that is bad for the greedy algorithm and demonstrates that the approximation guarantee of this algorithm is as poor as is stated in the theorem. The bad instance we construct is defined over a ground set $\groundset = Z \cup A$, where $Z = \{z_1,\dotsc,z_{n-2}\}$ and $A = \{o, v\}$. The matroid of the bad instance is a simplified partition matroid $\cM = (\groundset,\cI)$ defined by the partition $\cN_1, \cN_2, \dotsc, \cN_{n - 1}$, where $\groundset_i = \{z_i\}$ for $i \in [n-2]$ and $\groundset_{n-1} = A$. Notice that this implies that a set $S$ is independent in $\cM$ if and only if $|S \cap A| = 1$, and thus, $\cM$ has exactly $2$ bases (inclusion-wise maximal sets): $Z \cup \{v\}$ and $Z \cup \{o\}$. In the following paragraph, we define an objective function $f$ for our bad instance. This function ensures that the greedy algorithm constructs the first of the above bases, despite the second base being significantly more valuable.

As promised, we now define the non-negative monotone $\gamma$-weakly submodular function $f\colon 2^\groundset \to \nnR$ that serves as the objective function of our bad instance. Let us define first a function $g \colon \nnR \to \nnR$ and a sequence $(a_i)_{i \geq 1}$ of values by
\begin{align*}
g(x) &\triangleq \ln(1 + x) \enspace, & 
a_i &\triangleq \gamma \cdot \min\{g(i) - g(i-1), 2 i^{-\beta}\} \text{ for all } i \geq 1\enspace,
\end{align*}
where $\beta \triangleq \gamma + \gamma^{-1} - 1$. Note that $\beta > 1$ for any $\gamma \in (0, 1)$. The objective function $f\colon 2^\groundset \to \nnR$ is now given by
\begin{equation}
f(S) \triangleq 2a_1 \cdot |S \cap \{v\}| + a_1 \cdot |S \cap \{o\}| + \begin{cases}
g(|S \cap Z|)& \text{if $o \in S$}\enspace, \\
\sum_{i = 1}^{|S \cap Z|}a_i & \text{if $o \not\in S$}\enspace,
\end{cases}
\label{eq:f-definition}
\end{equation}
for every $S \subseteq \groundset$. Notice that for sets $S$ that include the element $o$, the objective function $f$ assigns a value of roughly $g(|S|)$, i.e., the sum of the discrete derivative $g(i) - g(i - 1)$ over all $i \in [|S|]$. For sets that do not include $o$, $f$ assigns roughly the value $\sum_{i = 1}^{|S|} a_i$. By definition, every $a_i$ is smaller than the corresponding discrete derivative $g(i) - g(i - 1)$, and thus, sets that include $o$ naturally tend to be more valuable compared to sets of the same size that do not include $o$. The following observation states that more formally.
\begin{observation} \label{obs:discrete_derivative}
For every integer $k \geq 0$, $\gamma \cdot g(k) = \gamma \cdot \sum_{i = 1}^k [g(i) - g(i - 1)] \geq \sum_{i = 1}^k a_i$.
\end{observation}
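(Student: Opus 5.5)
The plan is to verify the observation term by term, and then sum. The equality $\gamma \cdot g(k) = \gamma \cdot \sum_{i=1}^k [g(i) - g(i-1)]$ is a telescoping identity. It uses only $g(0) = \ln(1 + 0) = 0$, and for $k = 0$ both sides (and the right-hand sum $\sum_{i=1}^0 a_i$) are empty, so they are $0$. Hence the only content is the inequality, and I will prove it by establishing, for every $i \geq 1$, the pointwise bound
\[
a_i = \gamma \cdot \min\{g(i) - g(i-1),\, 2 i^{-\beta}\} \leq \gamma \cdot [g(i) - g(i-1)] \enspace.
\]

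This pointwise bound holds because the minimum of two reals is at most either of them, and multiplying by $\gamma > 0$ preserves the inequality. Summing it over $i = 1, \dotsc, k$ then yields $\sum_{i=1}^k a_i \leq \gamma \sum_{i=1}^k [g(i) - g(i-1)] = \gamma \cdot g(k)$, which is the claim.

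I expect no real obstacle. The one point to be careful about is that the definition of $a_i$ multiplies the minimum by $\gamma$, so the comparison is with $\gamma$ times the discrete derivative, as the statement requires, and not with the derivative alone. Monotonicity of $g$, and positivity of $g(i) - g(i-1) = \ln\frac{i+1}{i}$, are not needed for the inequality itself. They only show that the $a_i$ are non-negative, which matters later when checking that $f$ is non-negative and monotone.
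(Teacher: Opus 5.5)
Your proof is correct and matches the paper's reasoning. The paper gives no separate proof; it justifies the observation by noting that, by definition, each $a_i$ is at most $\gamma\cdot[g(i)-g(i-1)]$, and obtains the equality as a telescoping sum using $g(0)=\ln 1=0$.
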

The gap shown by the last observation between $g(|S|)$ and $\sum_{i = 1}^{|S|} a_i$ is only a factor of $\gamma$. To prove that the instance we construct is bad, we need this multiplicative gap to increase with $|S|$. The definition of the series $\{a_i\}_{i \geq 1}$ was carefully chosen to allow for that, and at the same time, preserve the properties of $f$ promised by Theorem~\ref{thm:greedy_bad_example}, i.e., non-negativity, monotonicity and $\gamma$-weak submodularity. The following observation formally shows that $f$ has the first two of these properties.

\begin{observation}
The function $f$ defined by \eqref{eq:f-definition} is non-negative and monotone, and the values in the sequence $(a_i)_{i \geq 1}$ are all positive.
\label{obs:f-simple-properties}
\end{observation}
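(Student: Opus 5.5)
We need to show three things: $f \geq 0$, $f$ is monotone, and every $a_i > 0$. The natural order is to handle the positivity of the $a_i$ first, since non-negativity and monotonicity both follow from it.

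\emph{Positivity of the $a_i$.} For every $i \geq 1$ we have $g(i) - g(i-1) = \ln\frac{i+1}{i} > 0$, because $g(x) = \ln(1+x)$ is strictly increasing. We also have $2i^{-\beta} > 0$ and $\gamma > 0$. Hence $a_i = \gamma \cdot \min\{g(i)-g(i-1), 2i^{-\beta}\} > 0$.

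\emph{Non-negativity.} In \eqref{eq:f-definition}, the two linear terms are non-negative because $a_1 > 0$. The case term is either $g(|S \cap Z|) = \ln(1 + |S\cap Z|) \geq 0$ or a sum of positive $a_i$'s. So $f(S) \geq 0$ for every $S$.

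\emph{Monotonicity.} I will show $f(u \mid S) \geq 0$ for every $S$ and every $u \notin S$, splitting on the type of $u$.

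If $u = v$, only the first term changes, so $f(u \mid S) = 2a_1 > 0$.

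If $u = z \in Z$, the linear terms are unchanged and $|S \cap Z|$ increases by one, say from $k$ to $k+1$. If $o \in S$, the gain is $g(k+1) - g(k) > 0$. If $o \notin S$, the gain is $a_{k+1} > 0$.

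If $u = o$, let $k = |S \cap Z|$. The gain is
\[
f(o \mid S) = a_1 + g(k) - \sum_{i=1}^{k} a_i .
\]
By Observation~\ref{obs:discrete_derivative}, $\sum_{i=1}^k a_i \leq \gamma g(k) \leq g(k)$, where the last step uses $g(k) \geq 0$ and $\gamma < 1$. Therefore $f(o \mid S) \geq a_1 > 0$.

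The only step that is not immediate is the case $u = o$, since adding $o$ switches which branch of the case definition applies. The comparison from Observation~\ref{obs:discrete_derivative} is exactly what settles it, so no real obstacle remains.
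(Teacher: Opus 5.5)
Your proof is correct and follows essentially the same route as the paper. Both arguments get positivity of the $a_i$ from $g(i)-g(i-1)>0$ and $2i^{-\beta}>0$, observe that adding any $u\neq o$ only increases terms, and handle $u=o$ via Observation~\ref{obs:discrete_derivative}. The only minor difference is that you check non-negativity term by term, whereas the paper deduces it from monotonicity and $f(\varnothing)=0$.
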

\begin{proof}
To see that the values in the sequence $(a_i)_{i \geq 1}$ are all positive, we note that for every $i \geq 1$, both $2i^{-\beta} > 0$ and $g(i) - g(i - 1) > 0$ (the second inequality holds since $g$ is an increasing function). This immediately proves that every term in the definition of $f(S)$ can only increase when an element $u \neq o$ is added to $S$. Thus, to show that $f$ is monotone, it only remains to observe that for any set $S \subseteq \groundset \setminus \{o\}$,
\[
	f(o \mid S)
	=
	a_1 + g(|S \cap Z|) - \sum_{i = 1}^{|S \cap Z|} a_i\\
	\geq
	a_1 + (1 - \gamma) \cdot g(|S \cap Z|)
	>
	0
	\enspace,
\]
where the first inequality follows from Observation~\ref{obs:discrete_derivative}.
The non-negativity of $f$ now follows from the monotonicity of $f$ since $f(\varnothing) = 0$ by $f$'s definition.
\end{proof}

Lemma~\ref{lem:f-weakly-submodular} below shows that the function $f$ is also $\gamma$-weakly submodular. However, before diving into the quite technical proof of this lemma, let us first show that Theorem~\ref{thm:greedy_bad_example} indeed follows if $f$ has this property.
\begin{proof}[Proof of Theorem~\ref{thm:greedy_bad_example}]
Consider the application of the greedy algorithm (Algorithm~\ref{alg:greedy}) to the above described simplified partition matroid $\cM = (\groundset,\cI)$  and the non-negative monotone $\gamma$-weakly submodular function $f\colon 2^\groundset \to \nnR$ defined by \eqref{eq:f-definition}. At the first iteration of the greedy algorithm, it starts with $S_0 = \varnothing$ and selects an element $u \in \cN$ maximizing $f(u \mid S_0)$. Since $f(v \mid S_0) = f(v \mid \varnothing) = 2a_1$ and $f(u \mid S_0) = f(u \mid \varnothing) = a_1$ for all elements $u \in \cN \setminus \{v\}$, the greedy algorithm selects $v$ in the first step. After that step, $o$ cannot be selected since only a single element of $A = \{v, o\}$ can appear in a feasible solution. Instead, the $n - 1$ elements of $Z$ are selected one by one, and thus, the output set of the greedy algorithm is $G = Z \cup \{v\}$. Note now that
\begin{multline*}
f(G) = 2a_1 + \sum_{i = 1}^{n-2}a_i \leq 3a_1 + \gamma \cdot \int_{1}^{n-2} 2 x^{-\beta}\,\mathrm{d}x
= 3a_1 + \gamma \cdot \left.\frac{2 x^{1-\beta}}{1-\beta}\right|_{x = 1}^{n-2} \\
\leq 3a_1 + \frac{2\gamma}{\beta-1} = 3\gamma \ln 2 + \frac{2\gamma^2}{(1 - \gamma)^2} \leq (3\ln 2 + 2)\cdot \frac{\gamma}{(1-\gamma)^2}\enspace.
\end{multline*}
On the other hand, consider the solution $O = Z \cup \{o\}$. For this solution,
\begin{equation*}
f(O) = a_1 + g(n-2) > \ln(n-1) \geq \frac{(1-\gamma)^2}{\gamma} \cdot \frac{\ln(n - 1)}{3 + 2\ln 2}\cdot f(G) = \Omega\bigg(\frac{(1 - \gamma)^2 \cdot \log n}{\gamma}\bigg) \cdot f(G)\enspace.\qedhere
\end{equation*}
\end{proof}

As mentioned above, Lemma~\ref{lem:f-weakly-submodular} below is devoted to proving that $f$ is $\gamma$-weakly submodular. The proof of this lemma uses the properties of the series $\{a_i\}_{i \geq 1}$ given by the next observation.
\begin{observation} \label{obs:a-term-bound}
The sequence $(a_i)_{i \geq 1}$ is decreasing, and for every integer $i \geq 1$, $a_i \leq \gamma / i$.
\end{observation}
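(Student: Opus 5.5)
The plan is to treat $a_i$ as $\gamma$ times the minimum of two sequences, $b_i \triangleq g(i) - g(i-1) = \ln\frac{i+1}{i}$ and $c_i \triangleq 2i^{-\beta}$. Monotonicity and the bound $a_i \le \gamma/i$ then follow from properties of these two sequences separately.

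For monotonicity, it suffices to show that both $(b_i)$ and $(c_i)$ are decreasing. The pointwise minimum of two decreasing sequences is decreasing, and multiplying by the positive constant $\gamma$ preserves this. The sequence $c_i = 2i^{-\beta}$ is strictly decreasing because $\beta > 1 > 0$. The sequence $b_i = \ln(1 + 1/i)$ is strictly decreasing because $1 + 1/i$ decreases in $i$ and $\ln$ is increasing. Equivalently, $g(x) = \ln(1+x)$ is concave, so its discrete derivatives decrease. Hence $(a_i)_{i \ge 1}$ is decreasing.

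For the upper bound, the key estimate is $\ln(1+x) \le x$ for $x > -1$. Applying it with $x = 1/i$ gives $b_i = \ln(1 + 1/i) \le 1/i$. Therefore
\[
a_i = \gamma \cdot \min\{b_i, c_i\} \le \gamma \cdot b_i \le \gamma / i,
\]
and the second sequence $c_i$ is not needed for this part at all.

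There is no real obstacle. The only subtlety is to use the $g$-branch of the minimum rather than the $2i^{-\beta}$ branch. The latter exceeds $1/i$ for small $i$: for instance $c_1 = 2$, whereas $b_1 = \ln 2 < 1$. So the bound $a_i \le \gamma / i$ must come from $b_i$, not from $c_i$.
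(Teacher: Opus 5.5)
Your proposal is correct and follows essentially the same route as the paper. Both branches of the minimum are decreasing (the $g$-branch by concavity of $g$, the other because $\beta > 1$), and the bound comes from $a_i \leq \gamma \cdot [g(i) - g(i-1)] = \gamma \ln(1 + 1/i) \leq \gamma / i$.
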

\begin{proof}
Recall that $a_i = \gamma \cdot \min\{g(i) - g(i-1), 2 i^{-\beta}\}$. The term $2 i^{-\beta}$ is decreasing in $i$ since $\beta > 1$, and the term $g(i) - g(i - 1)$ is also decreasing since $g$ is concave. Thus, $(a_i)_{i \geq 1}$ is indeed a decreasing series. Additionally,
\[
	a_i \leq \gamma \cdot [g(i) - g(i-1)] = \gamma \cdot [\ln(1 + i) - \ln(i)] = \gamma \cdot \ln\left(1 + \frac{1}{i}\right) \leq \frac{\gamma}{i}\enspace.
	\qedhere
\]
\end{proof}

\begin{lemma}
The function $f$ defined by \eqref{eq:f-definition} is $\gamma$-weakly submodular.
\label{lem:f-weakly-submodular}
\end{lemma}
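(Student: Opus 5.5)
The plan is to verify \eqref{eq:gamma-ws} directly for every pair $S \subseteq T \subseteq \groundset$. Write $s = |S \cap Z|$, $t = |T \cap Z|$, $k = t - s$ and $A_j = \sum_{i=1}^j a_i$. I split into three cases according to where $o$ lies relative to $S$ and $T$. In every case the $v$-term of $f$ is additive and independent of everything else. It therefore contributes the same amount ($2a_1$ or $0$) to both sides, and since $\gamma \le 1$ it can be dropped. The same remark applies to the term $a_1 |S \cap \{o\}|$ whenever $o$ is on the same side of $S$ and $T$.

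\emph{Case 1: $o \in S$.} Then $f(T \mid S) = g(t) - g(s)$ up to the $v$-term. Each $z \in (T \setminus S) \cap Z$ has $f(z \mid S) = g(s+1) - g(s)$. Concavity of $g$ gives $k\,[g(s+1)-g(s)] \ge g(t) - g(s)$, so the inequality holds even with $\gamma = 1$.

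\emph{Case 2: $o \notin T$.} Now $f(T \mid S) = A_t - A_s = \sum_{i=s+1}^{t} a_i$. The singleton marginals of the elements of $Z$ are each $a_{s+1}$. Since the sequence is decreasing (Observation~\ref{obs:a-term-bound}), we get $k a_{s+1} \ge A_t - A_s$, and again the inequality holds with ratio $1$.

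\emph{Case 3: $o \in T \setminus S$.} This is the main case. Here $f(T \mid S) = a_1 + g(t) - A_s$, whereas the singleton marginals sum to $f(o \mid S) + k a_{s+1} = a_1 + g(s) - A_s + k a_{s+1}$. After discarding $a_1$ (using $\gamma a_1 \le a_1$) and rearranging, it suffices to show
\[
\gamma\,[g(t) - g(s)] \le (1-\gamma)\,[g(s) - A_s] + k\,a_{s+1}.
\]
The right-hand side's first term is nonnegative by Observation~\ref{obs:discrete_derivative}. I then split according to which term attains the minimum in $a_{s+1}$.

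If $a_{s+1} = \gamma\,[g(s+1)-g(s)]$, concavity of $g$ gives $k a_{s+1} \ge \gamma\,[g(t)-g(s)]$ and we are done.

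Otherwise $a_{s+1} = 2\gamma (s+1)^{-\beta}$. Setting $x = k/(s+1)$, the left-hand side equals $\gamma \ln(1+x)$ and $k a_{s+1} = \gamma c x$ with $c = 2(s+1)^{1-\beta}$. Maximizing $\ln(1+x) - cx$ over $x \ge 0$ yields at most $\ln(1/c) \le (\beta - 1)\ln(s+1)$. It is then enough to show $\gamma(\beta-1)\ln(1+s) \le (1-\gamma)[g(s) - A_s]$.

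The key identity is $\gamma(\beta - 1) = \gamma^2 - 2\gamma + 1 = (1-\gamma)^2$, which is where the choice of $\beta$ enters. Dividing by $1-\gamma$, the requirement becomes $(1-\gamma)\ln(1+s) \le \ln(1+s) - A_s$, that is, $A_s \le \gamma\, g(s)$. This is exactly Observation~\ref{obs:discrete_derivative}.

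I expect this last subcase to be the only delicate step. The left side grows without bound in $t$, so one must balance the logarithmic growth against the linear term $k a_{s+1}$. Some care is also needed with boundary cases ($s = 0$, and $c > 1$, where the maximum of $\ln(1+x) - cx$ is simply $0$), but these only make the bound easier.
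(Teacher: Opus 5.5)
Your proof is correct, and its skeleton matches the paper's: the same three cases, the same reduction of Case 3 to the paper's inequality \eqref{case-3-main-goal} (your $\gamma[g(t)-g(s)] \le (1-\gamma)[g(s)-A_s] + k a_{s+1}$ is a rearrangement of it), and the same split on which term attains the minimum in $a_{s+1}$. The genuine difference is in the hard subcase $a_{s+1} = 2\gamma(s+1)^{-\beta}$.

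The paper keeps the goal in ratio form, $\frac{\gamma g(s+k)+h(s)}{k}\le a_{s+1}$ with $h(s)=(1-\gamma)A_s-g(s)$. It locates the maximizer in $k$ inside an interval via the sign of the derivative. It then bounds the maximum by evaluating the numerator at the right endpoint and the denominator at the left endpoint, which gives $\gamma/(e^{-h(s)/\gamma}-1-s)$. To finish, it needs both $e^{-h(s)/\gamma}\ge(1+s)^\beta = 2\gamma/a_{s+1}$ and $s+1\le\gamma/a_{s+1}$.

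You instead keep the difference form, rescale by $s+1$, and maximize the concave function $\ln(1+x)-cx$ exactly. The identity $\gamma(\beta-1)=(1-\gamma)^2$ then lands you exactly on Observation~\ref{obs:discrete_derivative}; this is the same identity as $\gamma-\gamma^2-1=-\gamma\beta$, which underlies \eqref{eq:h-bound}.

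Your route is shorter, and it is also tighter. You discard only the nonpositive terms $-\ln 2$ and $c-1$, so the argument would still work with $i^{-\beta}$ in place of $2i^{-\beta}$ in the definition of $a_i$. In the paper's argument, by contrast, the factor $2$ is exactly what keeps the denominator $\frac{2\gamma}{a_{s+1}}-\frac{\gamma}{a_{s+1}}$ positive after the lossy endpoint bounds.

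One small remark: in this subcase $c=(s+1)a_{s+1}/\gamma\le 1$ by Observation~\ref{obs:a-term-bound}. So the boundary case $c>1$ you flagged never actually arises there, although your handling of it is harmless.
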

\begin{proof}
To prove the lemma, we need to show that \eqref{eq:gamma-ws} holds for every $S \subseteq T \subseteq \groundset$. We consider $3$ cases based on whether $o$ is present in $S$ or $T$. In the first two cases, we show that $f$ in fact behaves as a submodular function. The third case is significantly more involved.

\paragraph{Case 1: \boldmath$o \not\in T$.} In this case,
\begin{align*}
\gamma \cdot f(T \mid S) \leq f(T \mid S) &= 2a_1 \cdot \characteristic[v \in T \setminus S] + \sum_{i = |S\cap Z| + 1}^{|T \cap Z|} \mspace{-18mu} a_i \\
&\leq 
2a_1 \cdot \characteristic[v \in T \setminus S] + |(T \setminus S) \cap Z| \cdot a_{|S \cap Z|+1} = \sum_{u \in T \setminus S}f(u \mid S)\enspace,
\end{align*}
where the inequality follows from the fact that $(a_i)_{i \geq 1}$ are non-increasing (Observation~\ref{obs:a-term-bound}).

\paragraph{Case 2: \boldmath$o \in S$.} Similarly to the previous case,
\begin{align*}
\gamma \cdot f(T \mid S) &\leq f(T \mid S) = 2a_1 \cdot \characteristic[v \in T \setminus S] + g(|T \cap Z|) - g(|S \cap Z|) \\
&\leq 
2a_1 \cdot \characteristic[v \in T \setminus S] + |(T \setminus S) \cap Z| \cdot \Bigl(g(|S \cap Z| + 1) - g(|S \cap Z|)\Bigr) = \sum_{u \in T \setminus S}f(u \mid S)\enspace,
\end{align*}
where the inequality follows from the fact that $g$ is concave.

\paragraph{Case 3: \boldmath$o \in T \setminus S$.} In this case, we first observe that 
\[
f(T \mid S) = a_1 \cdot (1 + 2 \cdot \characteristic[v \in T \setminus S]) + g(|T \cap Z|) - \sum_{i = 1}^{|S \cap Z|}a_{i}\enspace,
\]
and for $u \not \in S$,
\[
f(u \mid S) = 
\begin{cases}
  a_{1} + g(|S \cap Z|) - \sum_{i = 1}^{|S \cap Z|} a_{i}\enspace, &u = o \\
  2a_{1}\enspace,& u = v \\
  a_{|S \cap Z| + 1}\enspace, &u \in Z\enspace.
\end{cases}
\]
Therefore, $\gamma \cdot f(T \mid S) \leq \sum_{u \in T \setminus S}f(u \mid S)$ if and only if
\begin{multline*}
\gamma \cdot a_{1}\cdot (1 + 2 \cdot \characteristic[v \in T \setminus S]) + \gamma \cdot g(|T \cap Z|) - \gamma \sum_{i = 1}^{|S \cap Z|}a_{i} \\ \leq 
\bigg(a_{1} + g(|S \cap Z|) - \sum_{i = 1}^{|S \cap Z |}a_{i}\bigg) + 2a_1 \cdot \characteristic[v \in T \setminus S]) + |(T \setminus S) \cap Z|\cdot a_{|S \cap Z| + 1} \enspace.
\end{multline*}
After rearranging this inequality and noting that $\gamma \cdot a_{1} < a_{1}$, we get that it suffices to show that
\begin{equation}
\gamma \cdot g(|T \cap Z|) - g(|S \cap Z|) + (1 - \gamma) \sum_{i = 1}^{|S \cap Z|}a_{i} \leq |(T \setminus S) \cap Z|\cdot a_{|S \cap Z| + 1}\enspace.
\label{case-3-main-goal}
\end{equation}
If $(T \setminus S) \cap Z = \varnothing$, Inequality~\eqref{case-3-main-goal} is equivalent to 
$\sum_{i = 1}^{|S\cap Z|}a_i \leq g(|S \cap Z|)$, which follows from Observation~\ref{obs:discrete_derivative}. Otherwise, let us set $x \triangleq |S \cap Z| \geq 0$, $y \triangleq |(T \setminus S) \cap Z| \geq 1$, and $h(x) \triangleq (1-\gamma)\sum_{i = 1}^xa_i - g(x)$. This allows us to re-express~\eqref{case-3-main-goal} as
\begin{equation}
\frac{\gamma \cdot g(x+y) + h(x)}{y} \leq a_{x + 1}\enspace.\label{eq:x-y-bound-inequality}
\end{equation}
The remainder of the proof is devoted to showing that~\eqref{eq:x-y-bound-inequality} holds for any $x \geq 0$ and $y\geq 1$. 
First, we observe that Observation~\ref{obs:discrete_derivative} implies that
\begin{equation}
h(x) = (1-\gamma)\sum_{i = 1}^x a_i - g(x)  \leq \bigl[(1-\gamma)\cdot \gamma  - 1  \bigr]\cdot g(x) = (\gamma -\gamma^2 - 1) \cdot g(x) =  -\gamma\beta \cdot g(x) \enspace.
\label{eq:h-bound}
\end{equation}
If $a_{x + 1} = \gamma \cdot [g(x+1) - g(x)]$, then~\eqref{eq:x-y-bound-inequality} holds since
\begin{equation*}
\frac{\gamma \cdot g(x+y) + h(x)}{y} \leq 
\frac{\gamma \cdot g(x+y) - \gamma \beta \cdot g(x)}{y} \leq 
\frac{\gamma \cdot [g(x+y) - g(x)]}{y} \leq 
\gamma \cdot[g(x+1) - g(x)]\enspace,
\end{equation*}
where the first inequality follows from~\eqref{eq:h-bound}, the second from $\beta > 1$ and the non-negativity of $g$, and the last from the concavity of $g$.

In the remaining case, we have $a_{x+1} = 2\gamma\cdot (x+1)^{-\beta}$.
We note that the value of $h(x)$ does not depend on $y$, and hence, taking the derivative of the left hand side~\eqref{eq:x-y-bound-inequality} with respect to $y$ yields
\begin{equation}
\frac{\gamma \cdot y \cdot g'(x + y) - \gamma \cdot 
g(x+y) - h(x)}{y^2}\enspace.
\label{eq:lhs-derivative}
\end{equation}
Since $g(x+y) = \ln(1 + x + y)$, we
have $\gamma \cdot y \cdot g'(x + y) = \frac{\gamma \cdot y}{1 + x + y} \in (0, \gamma)$ for all $y \geq 1$. Thus, whenever $y < e^{-\frac{h(x)}{\gamma}}- 1 - x$ the numerator of~\eqref{eq:lhs-derivative} is larger than
\begin{equation*}
0 - \gamma \cdot g(x+y) - h(x) 
= - \gamma \cdot \ln(1 + x + y) - h(x) 
> - \gamma \cdot \ln\Bigl(e^{-\frac{h(x)}{\gamma}}\Bigr) - h(x) = 0\enspace,
\end{equation*}
and hence, the left hand side of~\eqref{eq:x-y-bound-inequality} is increasing in $y$.
Similarly, whenever $y > e^{1-\frac{h(x)}{\gamma}}- 1 - x$, the numerator of~\eqref{eq:lhs-derivative} is smaller than
\begin{equation*}
\gamma - \gamma \cdot g(x+y) - h(x) 
= \gamma - \gamma \cdot \ln(1 + x + y) - h(x) 
< \gamma - \gamma \cdot \ln\Bigl(e^{1 - \frac{h(x)}{\gamma}}\Bigr) - h(x) = 0 \enspace,
\end{equation*}
and hence, the left hand side of~\eqref{eq:x-y-bound-inequality} is decreasing in $y$. It follows that the left side of~\eqref{eq:x-y-bound-inequality} achieves its maximum value at some $y^*$ with $e^{-\frac{h(x)}{\gamma}} - 1 - x \leq y^* \leq e^{1-\frac{h(x)}{\gamma}} - 1 - x$. Using the upper bound in the numerator of~\eqref{eq:x-y-bound-inequality} and the lower bound in the denominator, this maximum value is at most
\begin{equation*}
\frac{\gamma \cdot g(x + e^{1-\frac{h(x)}{\gamma}} - 1 - x) + h(x)}{e^{-\frac{h(x)}{\gamma}} - 1 - x}
= 
\frac{\gamma \cdot \ln\Bigl(e^{1-\frac{h(x)}{\gamma}}\Bigr) + h(x)}{e^{-\frac{h(x)}{\gamma}} - 1 - x} 
=
\frac{\gamma}{e^{\frac{-h(x)}{\gamma}} - 1 - x}\enspace.
\end{equation*}
Now, we consider the denominator of this last expression. By~\eqref{eq:h-bound}, we have 
\begin{equation*}
e^{\frac{-h(x)}{\gamma}} \geq e^{\beta \cdot g(x)} = e^{\beta\cdot \ln(1+x)} = (1+x)^\beta = \frac{2\gamma}{a_{x+1}}\enspace,
\end{equation*}
and by Observation~\eqref{obs:a-term-bound}, we have $x + 1 \leq \frac{\gamma}{a_{x + 1}}$. Thus, altogether, we have
\begin{equation*}
\frac{\gamma \cdot g(x + y) + h(x)}{y} \leq \frac{\gamma}{e^{\frac{-h(x)}{\gamma}} - 1 - x} \leq \frac{\gamma}{\frac{2\gamma}{a_{x+1}} - \frac{\gamma}{a_{x+1}}} = a_{x+1}\enspace,\end{equation*}
which completes the proof of Inequality~\eqref{eq:x-y-bound-inequality}.
\end{proof}


\bibliographystyle{plainurl}
\bibliography{WeakSubmodularity}

\end{document}